\newtheorem{theorem}{Theorem}[section]
\newtheorem{corollary}[theorem]{Corollary}
\newtheorem{definition}[theorem]{Definition}
\newtheorem{remark}[theorem]{Remark}
\newtheorem{lem}{Lemma}[section]
\theoremstyle{definition}
\newtheorem{ex}[lem]{Example}
\theoremstyle{remark}
\begin{document}
\title{Mosaics of Combinatorial Designs for 
Semantic Security on Quantum Wiretap Channels }
\author{Holger Boche\\
Lehrstuhl f\"ur Theoretische
Informationstechnik,\\
 Technische Universit\"at M\"unchen,\\
Munich, Germany\\
Excellence Cluster Cyber Security\\
 Cyber 
 Security in the Age of Large-Scale Adversaries,\\
 Ruhr-Universit\"at Bochum,\\
  Bochum, Germany\\
 Munich Center for Quantum Science and Technology (MCQST),\\ 
 Munich, Germany\\
boche@tum.de \and Minglai Cai\\
Quantum Information Group\\
Universitat Aut\`{o}noma
 de Barcelona,\\ 
 Barcelona, Spain \and 
Moritz Wiese \\
Lehrstuhl f\"ur Theoretische
Informationstechnik,\\
 Technische Universit\"at M\"unchen,\\
Munich, Germany\\
Excellence Cluster Cyber Security\\
 Cyber 
 Security in the Age of Large-Scale Adversaries,\\
 Ruhr-Universit\"at Bochum,\\
  Bochum, Germany\\
wiese@tum.de
}
\maketitle
\begin{abstract}
We study  semantic security
for classical-quantum channels.
Our security functions are
functional forms
of mosaics of combinatorial designs.
We extend methods  of \cite{Bo/Wi}
for classical channels to classical-quantum channels
to demonstrate that mosaics of designs ensure semantic security
for classical-quantum channels, and are also capacity achieving
coding scheme..
The legitimate channel users share an additional public resource,   more precisely,
   a seed  chosen uniformly at random.
  An advantage of these modular wiretap codes 
is  that we   provide explicit code constructions 
that can be implemented
in practice for every channels, giving
an arbitrary public code.
\end{abstract}

\tableofcontents
\section{Introduction}

We investigate the transmission of messages from a sender to a 
legitimate receiver to ensure semantic security.
The model of a wiretap channel adds a third party to the communication 
problem with focus on secure communication, meaning 
communication without that third party getting to know the messages. This 
model was first introduced by Wyner in \cite{Wyn}.
A classical-quantum channel with an eavesdropper is called a classical-quantum wiretap channel, 
and its secrecy capacity has been determined in \cite{Ca/Wi/Ye} and \cite{De}.

In most of the previous works only strong security is  required, 
meaning that given a  uniformly distributed message sent through the channel, 
the eavesdropper shall obtain no information about it. 
This is the more common secrecy criterion in the quantum information theory.  
However, our goal is 
a stronger security property formalized by \cite{Be/Te/Va}, namely the  semantic security.
     Semantic security in the information theory imposes the eavesdropper 
to gain no information for any distribution of the messages, not just the uniform one.
In cryptography, semantic security is a security goal for key cryptosystem 
such that the eavesdropper cannot distinguish the given encryption  of any two messages. 
(cf. \cite{Go/Mi} and \cite{Be/Te/Va}).
The equivalence between semantic security
and message indistinguishability under chosen-plaintext attack has been shown in \cite{Go/Mi2}.\vspace{0.2cm}

Most of these pre-works 
 merely delivered an existence proof that
there exists secure codes achieving the security capacity
formula, but do not answer the question how
these codes can be construct.
However, in recent years, explicit  code constructions 
become more important in secure network design
and quantum communications 
(cf.  \cite{Re/Du/Re} for an example).
 An essential aspect of code
construction  is that  it can be implemented
in practice.
It is expected, that information theoretical
 security will play an important roll in
 future communication systems. It is the very
 technique to achieve security by design, which is
 already a key requirement for 6G (\cite{Fe/Bo1} and \cite{Fe/Bo2}).
 Furthermore, it is expected, that quantum communication and the use
 of quantum resources will be important for achieving 
 the design goals of 6G (\cite{Fi/Bo}).
Therefore, the construction of secure codes for
quantum wiretap channels is an important  requirement.

\cite{Be/Te/Va} investigated 
semantic security for certain special 
classical wiretap channels and could provide
 semantic secure code constructions for these channels.
     \cite{Li/Ya/Li} provided  semantic secure polar code constructions 
for Gaussian wiretap channels.
Please also see \cite{Bl/Ha/Th} for further explicit semantic 
secure code constructions of certain  channels, and \cite{Be/Li/Ja/Jo} for 
another approach of secure  code
design of Gaussian wiretap channels.\vspace{0.2cm}

In this work,
we consider modular wiretap codes constructed
from an  arbitrary transmission code and
a security function. This modular code concept
is  not only  an existence statement, 
but actually show explicitly how  to construct semantic
secrecy capacity achieving
codes for \bf every \rm classical-quantum wiretap channel, as long as
\bf any \rm reliable public  transmission code
is given.

  Using modular code with hash functions, defined via the function inverses in terms of
group homomorphisms, as security function, 
    \cite{Hay/Mat}
 showed   semantic security 
      for  classical channels
(cf. Example \ref{ex2}).
    The  technique of \cite{Hay/Mat} has been applied  in \cite{Ha15}
 for  additive  fully quantum channels,
when the eavesdropper has access to the whole environment.
     However, it is unknown whether the seed required in \cite{Hay/Mat}
 can be as short as for the security functions in this work.
    Furthermore, the results in \cite{Ha15}
are limited to linear codes and to
additive channels. 

\begin{center}\begin{figure}[H]
\includegraphics[width=0.9\linewidth]{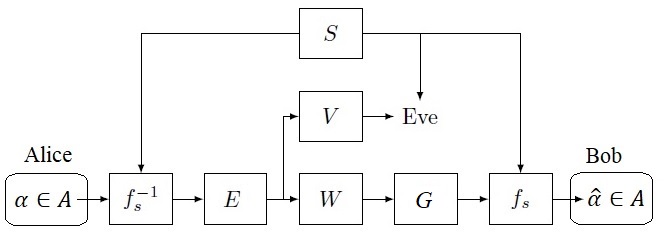}
\caption{
 The classical-quantum wiretap channel scenario.
   $W$ denotes the legal
channel, while $V$ is the wiretap.
     $(E,G)$ is
 a reliable code. $f$ is 
a functional form of mosaics of combinatorial design.
 The  seed $s$ has to be known to the
sender and receiver, and may be known by the eavesdropper.
}\label{avcqwcwabsr}
\end{figure}\end{center}

   A modular code
for the classical-quantum wiretap channel is illustrated in the Figure \ref{avcqwcwabsr}.
    A reliable public  code $\mathcal{C}_{public}$
$=(E,\{G_{\alpha}:\alpha\})$
from the sender to the intended
receiver with input alphabet ${\mathcal X}$,
consisting of a encoder $E$ and
a set of decoder operators $\{G_{\alpha}:\alpha\}$,
 is given. $f$ is a hash function.
     The sender and the intended receiver 
have to share a seed $s$, chosen uniformly at random.   
Given any message $\alpha$ and seed $s$, the 
sender randomly chooses a preimage $x$, satisfying $f(s,x)=\alpha$,
and send through the channel via the given encoder. 
We assume  that the receive can decoder $x$ with
decoding error $P_e(\mathcal{C}_{public})$.
Since  $s$ is known by
the receiver, $a$  can recovered with
decoding error $P_e(\mathcal{C}_{public})$. 
We emphasize that the seed is not a secret key, 
since we do not require it to be unknown to the eavesdropper.
 There is a   separation of
the security task and the reliability task:
Since the intended receiver knows $s$, he can recover $\alpha$
with errors $P_e(\mathcal{C}_{public})$.
The reliability task depends here only
on  $\mathcal{C}_{public}$, but not on $f$.
    On the other hand,
the security task depends here only
on  $f$,  but not on $\mathcal{C}_{public}$.
This is a notable advantage of  
this modular code, namely since
the reliability task depends here only
on  $\mathcal{C}_{public}$, the efficiency 
of  reliability  of the  modular code is the same
as the reliability 
of the given public  code. 
The existence of efficient reliable public  codes
and their constructions have been already extensively
analyzed in the above cited pre-works. 
On the other hand, since 
the security task depends here only
on  $f$, a functional
form of mosaic of combinatorial design, the efficiency 
of  security   of the  modular code can be
analyzed independent of the given public  code,
when we can show that the functional
forms of mosaic  designs  guarantee semantic security.
We will show that the functional
forms of mosaic  designs always proves semantic security for
modular code made of any  public  code
 in this work. Together with the well-know results of efficient reliable public  codes,
 we show a most general  semantic secure code constructions
 that can be implemented
in practice.

Using modular code with  
functional
forms of mosaics of combinatorial designs
as hash functions,  \cite{Bo/Wi}
showed semantic security for classical channels.    
 In this paper we     extend this method
to classical-quantum channels,
   i.e., we construct a modular code, where the hash functions,
  which are
functional
forms of mosaics of combinatorial design,
are used as  security function.
   A functional form of mosaics of combinatorial design
has the  form $f:\mathcal S\times\mathcal X\to\mathcal A$. 
Here, $\mathcal S$ is a seed set, and $\mathcal A$ is the set of messages. 
Every
pre-image  is
 the incidence relation of a balanced incomplete block design (BIBD) or a
group divisible design (GDD). Such a function
defines a mosaic of designs  $\{D_{\alpha}: \alpha\in A\}$, 
which is a family of designs on  $\mathcal{X}$
and $\mathcal{S}$ satisfying that every pair $(x,s)$  is incident in a
unique  $D_{\alpha}$.
We bound the information leak  to the
eavesdropper and show  semantic security.
The approach 
 can be applied on    \bf  any \rm  reliable  transmission code.

   We apply the standard
derandomization technique.
The original derandomization technique
works this way (cf.  the code concept for  arbitrarily varying classical-quantum channels in
\cite{Ahl/Bli}):
For every message, a two-part code word,  which consists of a non-secure code word
 and a    modular code word, is built. The first part
is used to
to produce  the seed.
The second  part
 is used to transmit the message. However, 
this standard  technique
may cause significant rate loss
when size of the needed seed set is too large.
Follow the idea
of \cite{Wi/Bo} and \cite{Bo/Ca/De/Fe/Wi},
we reduce the total size of channel uses by
reusing the seed.  
Instead of one  two-part code word for
every single message, we build for multiple
messages a block of code words. The 
leading  public code word is used to
create one single seed.
The following parts are
multiple  semantic secure modular code words,
which share this one single seed.
With this code concept, the  semantic secrecy capacity
for classical-quantum channel
given in \cite{Bo/Ca/De/Fe/Wi} can be achieved.

As mentioned above, a  notable advantage of the modular wiretap codes presented
in this paper is that they
separate the the task of reliable message transmission 
from the sender to the intended receiver from that of securing 
the message from the eavesdropper. Moreover, the application of 
the functional form of a mosaic of designs does not require any
quantum operations. Hence  semantic security, 
can be realized in classical software or hardware. 
In particular, security functions do not have to appear in the lowest, physical layer of the 
open systems interconnection (OSI) networking model (cf. \cite{Ta/We}). 
Instead, they could be implemented on the medium access control (MAC) 
layer, for example. As soon as the data which are to be transmitted securely over the data link are known, the randomized inverse of the security function can be applied. The physical layer can be left untouched. This allows for an easy integration of information theoretic security into existing systems, especially if the higher layers are realized in software. In this case, a simple software update is sufficient in order to enable the support of information theoretic security.


\section{Basic Notations and Definitions}
\label{BNaD}

\subsection{Communication Scenarios}
For a finite set $B$, we denote the
set of probability distributions on $B$ by $P(B)$.
Let $\rho_1$ and  $\rho_2$ be  Hermitian   operators on a  finite-dimensional
complex Hilbert  space ${\cal H}$.
We say $\rho_1\geq\rho_2$ and $\rho_2\leq\rho_1$ if $\rho_1-\rho_2$
is positive-semidefinite.
 For a finite-dimensional
complex Hilbert space  ${\cal H}$, we denote
the
set of  density operators on $G$ by
\[\mathcal{S}(G):= \{\rho \in \mathcal{L}(G) :\rho  \text{ is Hermitian, } \rho \geq 0_{{\cal H}} \text{ , }  \mathrm{tr}(\rho) = 1 \}\text{ ,}\]
where $\mathcal{L}({\cal H})$ is the set  of linear  operators on ${\cal H}$, and $0_{{\cal H}}$ is the null
matrix on ${\cal H}$. Note that any operator in $\mathcal{S}({\cal H})$ is bounded.

 Throughout the paper the logarithm base   is  $2$.\vspace{0.2cm}

Let $H$ be a finite-dimensional complex Hilbert space.  
A \textbf{classical-quantum channel}    is
a  map $W: \mathcal{X}\rightarrow\mathcal{S}({\cal H})$,
$ x \rightarrow W(x)$.

\begin{definition}
Let $\mathcal{X}$ be a finite set. 
Let  $H$ and $H'$ be finite-dimensional complex Hilbert spaces. 
Let $W$ be a classical-quantum channel $\mathcal{X} \rightarrow \mathcal{S}(H)$ and 
${V}$ be a classical-quantum channel $\mathcal{X} \rightarrow \mathcal{S}(H')$. 
We call  the  classical-quantum channel pair  $(W,{V})$ 
a \textbf{classical-quantum wiretap channel}. 
The legitimate receiver accesses the output of the first channel $W$, 
and the eavesdropper observes the output of the second channel ${V}$ in the pair, 
respectively.
\end{definition}

For a quantum state $\rho\in \mathcal{S}({\cal H})$ we denote the von Neumann
entropy of $\rho$ by $S(\rho)=- \mathrm{tr}(\rho\log\rho)$. 
Let $\Phi := \{\rho_x : x\in \mathcal{X}\}$  be a set of quantum  states
labeled by elements of $\mathcal{X}$. For a probability distribution  $Q$
on $\mathcal{X}$, the Holevo $\chi$ quantity, or Holevo information, is defined as
\[\chi(Q;\Phi) := S\left(\sum_{x\in \mathcal{X}} Q(x)\rho_x\right)
-\sum_{x\in \mathcal{X}} Q(x)S\left(\rho_x\right)\text{ .}\]

Let $\rho$ and $\sigma$ be two positive semi-definite operators.
The quantum relative entropy between  $\rho$ and $\sigma$ is defined as follows:
\[D(\rho\parallel\sigma) 
:= \mathrm{tr}\rho\left(\log \rho- \log \sigma \right)\]
if $supp(\rho) \subset supp(\sigma)$, and $=\infty$ otherwise.

The R\'enyi $2$-relative entropy 
between  $\rho$ and $\sigma$ is defined as
\[D_{2}(\rho\parallel\sigma) := \log \mathrm{tr}\left(\rho^{2}\sigma^{-1}\right)\]
if $supp(\rho) \subset supp(\sigma)$, and $=\infty$ otherwise.

It is well-known that for any density operators $\rho$ and $\sigma$, 
 it holds (cf. \cite{Mo/Da})
\begin{equation}D_{\alpha}(\rho\parallel\sigma)
\leq D_{\alpha'}(\rho\parallel\sigma)\text{ .}\label{ldarsld}\end{equation}
Furthermore it holds
\begin{equation}\lim_{\alpha\nearrow 1} D_{\alpha}(\rho\parallel\sigma) = 
\lim_{\alpha\searrow 1} D_{\alpha}(\rho\parallel\sigma) =
D(\rho\parallel\sigma)\text{ .}\label{ldarsld2}\end{equation}

\begin{definition}
Let $\mathcal{A}_n=\left\{1,\cdots,J_n\right\}$.
 An $(n, J_n)$  \textbf{code} $\mathcal{C}$ for  $(W,{V})$
consists of a stochastic encoder $E$ : $\mathcal{A}_n \rightarrow P({\mathcal{X}}^n)$, 
${\alpha}\rightarrow E(\cdot|{\alpha})$,
 specified by
a matrix of conditional probabilities $E(\cdot|\cdot)$, and
 a positive operator-valued measure (POVM) $\left\{G_{\alpha}: \alpha\in \mathcal{A}_n\right\}$
on ${H}^{\otimes n}$ which we call the decoder operators.

The  average probability of the decoding error of a code $\mathcal{C}$ is defined as 
\[P_e(\mathcal{C}, n) := 
 1- \frac{1}{J_n} \sum_{\alpha\in \mathcal{A}_n}
E(x^n|{\alpha})\mathrm{tr}(W^{\otimes n}(x^n)G_{\alpha})\text{ .}\]

The  maximal probability of the decoding error of a code $\mathcal{C}$ is defined as 
\[P_e^{m}(\mathcal{C}, n) := 
 1-  \max_{\alpha\in \mathcal{A}_n}
E(x^n|\alpha)\mathrm{tr}(W^{\otimes n}(x^n)G_{\alpha})\text{ .}\]

For any random variable $A$ on the messages set $\mathcal{A}_n$ ,
the  \textbf{leakage}   of $\mathcal{C}$ with respect to $A$ is defined as
$\chi(A;Z)$, here $Z$  $=\{Z(\alpha): \alpha\in \mathcal{A}_n\}$ are the resulting  quantum states
at the output of wiretap channels  $V$. 
\end{definition}

A code is created by the sender and the legal receiver before the
message transmission starts. The sender uses the encoder to encode the
message that he wants to send, while the legal receiver uses the
decoder operators on the channel output to decode the message.

\begin{definition}
A non-negative number $R$ is an achievable \textbf{semantic secrecy
rate}  for the classical-quantum wiretap channel
 $(W,{V})$ under the average (or the maximum) error 
criteria if for every $\epsilon>0$, $\delta>0$,
$\zeta>0$ and sufficiently large $n$ there exists an  $(n, J_n)$
code $\mathcal{C} = (E, \{G_{\alpha}: \alpha\in \mathcal{A}_n\})$  such that $\frac{\log
J_n}{n}
> R-\delta$, and     every     variable $Q$ with 
arbitrary distribution  on  $\mathcal{A}_n$
such that
\[P_e(\mathcal{C},n)< \epsilon\text{ .,}\]
 \[\text{ (or, }P_e^{m}(\mathcal{C},n)< \epsilon\text{ ,)}\] and
\[\chi\left(Q;Z^n\right) < \zeta\text{ ,}\] respectively.

The supremum over all achievable semantic secrecy rates
under the average and maximum error criteria  of
$(W,{V})$ is called the semantic secrecy
capacity of $(W,{V})$, denoted by
$C_{sem}((W,{V}))$ and $C_{sem}^{m}((W,{V}))$ , respectively.
\end{definition}
\vspace{0.3cm}

\subsection{Mosaics of combinatorial designs}
\label{Mocd}
We define the  mosaics of combinatorial designs
in the same way as in \cite{Bo/Wi}. For the sake
of completeness, the definitions in \cite{Bo/Wi}
are stated below.\vspace{0.2cm}

Let $\mathcal{X}$ and $\mathcal{S}$ be finite sets. 
An incidence structure $D = (\mathcal{X}, \mathcal{S}, I)$ 
on $(\mathcal{X}, \mathcal{S})$ is
determined by the incidence relation $I$ 
on $\mathcal{X} \times\mathcal{S}$. 
An incidence structure $(\mathcal{X}, \mathcal{S}, I)$
is called empty if $I = \emptyset$. 
If  $xIs$, 
then $x$ and $s$ are called incident. The incidence
matrix of an incidence structure $D = (\mathcal{X}, \mathcal{S}, I)$ 
is the $01$-matrix $N$ with rows
indexed by $\mathcal{X}$ and columns indexed by $\mathcal{S}$ such that 
$N_{x,s} = 1$ if and only if $x$
and $s$ are incident in $D$.

A mosaic of incidence structures on $(\mathcal{X}, \mathcal{S})$ 
is a family $M = \{D_{\alpha}: \alpha\in \mathcal{A}\}$ of
nonempty incidence structures on  $(\mathcal{X}, \mathcal{S})$ 
 such that for every pair $(x,s)$ there exists
a unique incidence structure $D_{\alpha}$ in which $x$
 and $s$ are incident. We call $\mathcal{A}$ the color
set of $M$. Every $D_{\alpha}$ is called a member of $M$. 
If $N_{\alpha}$ is the incidence matrix of $D_{\alpha}$,
then $\sum_{\alpha} D_{\alpha}= J$, here  $J$ is the all-ones matrix
of appropriate dimensions.

Any function $f : \mathcal{X}\times \mathcal{S} \rightarrow \mathcal{A}$ induces a mosaic 
$\{D_{\alpha}: \alpha\in \mathcal{A}\}$  of incidence structures,
where $x$ and $s$ are incident in  $D_{\alpha}$ if and only if 
$f(x,s) = \alpha$. We say that $f$ is the
functional form of this mosaic. Clearly, every mosaic 
$\{D_{\alpha}: \alpha\in \mathcal{A}\}$  on $(\mathcal{X}, \mathcal{S})$ has a
functional form $f : \mathcal{X}\times \mathcal{S} \rightarrow A$.

We consider the case where every $D_{\alpha}$ 
is a combinatorial design. In the context
of designs, we will call $\mathcal{X}$  
the point set and $\mathcal{S}$  the block index set. 
We set $v = |\mathcal{X}|$. A $(v, k, r)$ tactical configuration 
on $(\mathcal{X}, \mathcal{S})$ is an incidence structure
where every point $x$ is incident with precisely $r$ 
block indices and every block index
$s$ is incident with precisely $k$ points. It holds that
\begin{equation} |\mathcal{S}|k = vr\text{ .}\label{skvr}\end{equation} \vspace{0.3cm}

A $(v, k, \lambda)$ balanced incomplete block design (BIBD) is an incidence structure
on $(\mathcal{X}, \mathcal{S})$ such that every $s\in\mathcal{S}$ is incident with precisely $k$ points 
from $\mathcal{X}$, and such that any
two distinct points from $\mathcal{X}$ are incident with precisely $\lambda$ 
common block indices. Every
$(v, k, \lambda)$ BIBD is a $(v, k, r)$ tactical configuration, where
\[r(k - 1) = \lambda(v - 1) \text{ .}\]
The key equality when we want to establish security using a security function which is
the functional form of a mosaic of BIBDs is that the incidence matrix $N$ of a $(v, k, \lambda)$ 
BIBD satisfies
\begin{equation} NN^*  = (r - \lambda)id + 
 \lambda J\text{ ,}\label{nnr-li2}\end{equation} 
here $id$ is the identity matrix of appropriate dimensions.\vspace{0.2cm}

A $(u, m, k, \lambda_1, \lambda_2)$ group divisible design (GDD) 
is based on a partition of $\mathcal{X}$ 
into $m$ point classes of size $u$
each, so $v = um$. Every block index is incident 
with precisely $k$ points, and two
points are incident with $\lambda_1$ common block 
indices if they are contained in the
same point class and with $\lambda_2$ block indices 
otherwise. A $(u, m, k, \lambda_1, \lambda_2)$ GDD
is a $(v, k, r)$ tactical configuration for $r$ satisfying
\[r(k - 1) = \lambda_1(u - 1) + \lambda_2(m - 1)u \text{ .}\]

Let $C$ be
the $01$-matrix with rows and columns indexed by $\mathcal{X}$ 
which has a $1$ in the $(x, x')$
entry if and only if $x$, and $x'$ are contained in the same point 
class. With a suitable
ordering of the elements of $\mathcal{X}$, 
this is a block diagonal matrix with $m$ all-ones
matrices of size $u$ each on the diagonal. Then
\begin{equation} NN^*  = (r - \lambda_1)id + 
(\lambda_1 - \lambda_2)C + \lambda_2J\text{ .}\label{nnr-li}\end{equation} \vspace{0.3cm}

\subsection{Modular Codes}\label{MC}
 
 \begin{definition}\label{abnjaqcftavc}
 Let $S$ be a uniformly distributed random variable on
 $\mathcal{S}$.
A common randomness code is a set
of $|\mathcal{S}|$  codes
 $\left\{\mathcal{C}^s = 
\bigl(E^s, \left\{G_{\alpha}^s,: \alpha\in \mathcal{A}_n\right\}\bigr):s\in \mathcal{S}\right\}$, 
labeled
by  $s$, the common randomness.
 
\end{definition}

\begin{definition}A non-negative number $R$ is an achievable secrecy rate  for the
classical-quantum wiretap channel
$(W,{V})$ under common randomness   quantum coding if  for
every
 $\delta>0$, $\zeta>0$, and  $\epsilon>0$, if $n$ is sufficiently large,
there is an $(n, J_n)$  common randomness 
 code $(\{\mathcal{C}^{s}:s\in \mathcal{S}\})$  such that
$\frac{\log J_n}{n} > R-\delta$, and
\[  \frac{1}{\left|\mathcal{S}\right|} \sum_{s\in \mathcal{S}}P_{e}(\mathcal{C}^{s},n) < \epsilon ~~~
(\text{ or }\frac{1}{\left|\mathcal{S}\right|} 
\sum_{s\in \mathcal{S}}P_{e}^{m}(\mathcal{C}^{s},n)\text{ , respectively,})\]
\[\max_A\frac{1}{\left|\mathcal{S}\right|} \sum_{s\in \mathcal{S}}
\chi\left(A,Z_{\mathcal{C}^{s}}\right)< \epsilon\text{ .}\] This
means that we do not require the common randomness  to be secure
against eavesdropping.\end{definition}\vspace{0.2cm}

We define a modular code
as follows.

\begin{definition}\label{defmod}
Let $\bigl(E, \left\{G_{x}: x\in \mathcal{X}_n\right\}\bigr)$ 
be a $(n, |\mathcal{X}_n|)$  code.
Let $f$ be a function
  $\mathcal{S}\times\mathcal{X}_n\to\mathcal{A}_n$.
We define the  modular code $\left\{\mathcal{C}^s = 
\bigl(E^s, \left\{G_{\alpha}^s,: \alpha\right\}\bigr):s\right\}$
to be the common randomness code 
such that for every $s$ and $\alpha$ we have:\\
$\bullet$ $E^s(x^n|\alpha)$ is the uniform distribution over
$\{x: f(s,x)=\alpha\}$\\
$\bullet$ $G_{\alpha}^s$ 
$=\sum_{f(s,x)=\alpha}G_{x}$.\\
   We call   $f$ the security function.  
\end{definition}

\section{Main Results}
\label{MR}

Assume a reliable $(n, |\mathcal{M}_n|)$ transmission code $\mathcal{C}_{public}$
with input 
alphabet  ${\mathcal X}$ is given.
The security function for the modular code is an onto
  $f:\mathcal S\times\mathcal X\to\mathcal A$ fictional form
of mosaic of combinatorial design. Here ${\mathcal A}$ 
is the set of our confidential messages.\vspace{0.2cm}

        As mentioned above, 
there is a   separation of
the security task and the reliability task:
Since the intended receiver knows the seed, he can recover the message
with error $P_e(\mathcal{C}_{public}, n)$.
The reliability task depends here only
on  $\mathcal{C}_{public}$, but not on $f$.
The security task, on the other side, only 
depends on $f$. We may analyze the secrecy task
 independent of   $\mathcal{C}_{public}$.
Thus, we consider the security for
the message transmission
``$\alpha$ $\rightarrow V \circ E\circ f^{-1}(\alpha)$''
(cf. Figure \ref{avcqwcwabsr}) for any encoder  $E$
instead of considering only the 
eavesdropper's channel $V$,
 i.e., we assume that the encoder  $E$ of the given
transmission code becoming part 
of eavesdropper's channel.
By this way, we can consider
the security under the assumption
that the legal receiver is already able
to decode the massage.
       By Definition \ref{defmod}, for any modular code,
the sender  has to choose a 
$x$ satisfying $\{x: f(s,x)=\alpha\}$.
Thus, when $s$ is the outcome of $S$,
for every encoder $E$, the 
resulting quantum states at the
output of eavesdropper's message transmission 
``$V \circ E\circ f^{-1}$'' 
are $\{Z_{s}(\alpha):\alpha\in \mathcal{A}\}$, where
$Z_{s}(\alpha) := \frac{1}{k}\sum_{x: f(s,x)=\alpha}V(x)$.
  We will show in Section \ref{SSbMoD} that
using security function 
of mosaic of combinatorial design ensures
semantic security, by delivering a security bound 
in terms of the Holevo quantity to the eavesdropper's
resulting states at the outcome of $V$ in Theorem \ref{lfsoma}.\vspace{0.2cm}

It is clear that the semantic
secrecy rate is not the same as the rate
of $\mathcal{C}_{public}$. We will consider
the secrecy rate in Section \ref{Groaom}.

\subsection{Semantic Security by Mosaics of Designs}\label{SSbMoD}

\vspace{0.2cm}

\begin{definition}
Let $V:$ $\mathcal{X} \rightarrow \mathcal{S}(H)$
be a classical-quantum channel. 
We denote
\[\sigma_{\mathcal{X}}:= \frac{1}{v}\sum_{x\in \mathcal{X}}  V(x)\text{ .} \]
Let $f :  \mathcal{S} \times \mathcal{X} \rightarrow \mathcal{A}$ be
the functional form of a mosaic of a $(u, m, k, \lambda_1, \lambda_2)$ GDD.
Let  $v$ and   $r$ be defined as in Section \ref{Mocd}. We denote
\begin{align*}& C(V, u, m, k, v, r, \lambda_1, \lambda_2 )\allowdisplaybreaks\notag\\
&:=\frac{r-\lambda_1}{kr} \sum_{x} \exp\left(D_2\left(V(x)\parallel  
\sigma_{\mathcal{X}}\right)\right)\allowdisplaybreaks\notag\\
&~+\frac{u(\lambda_1-\lambda_2)}{kr}\frac{1}{m} \sum_i
 \exp\left(D_2\left(\frac{1}{u}\sum_{x\in\mathcal{X}_i}V(x){\displaystyle \parallel } 
\sigma_{\mathcal{X}}\right)\right)\allowdisplaybreaks\notag\\
&~+\frac{v\lambda_2}{kr}\text{ .}
\end{align*}
\end{definition}

Assume that the confidential
messages to be transmitted are represented by 
the random variable $A$ on $\mathcal{A}$. The random
seed is represented by $S$, uniformly distributed 
on $\mathcal{S}$ and independent of $A$.
Assume the output of $S$ is $s$, and
the message $\alpha$ has to be send.


We formulate our security bound for privacy amplification
in terms of the Holevo quantity. According to \cite{Ho}
and  \cite{Sch/Wes}, the eavesdropper can never obtain more information
asymptotically than the Holevo $\chi$ quantity, no matter which
strategy  the eavesdropper uses.

\begin{theorem}\label{lfsoma}
Let $V:$ $\mathcal{X} \rightarrow \mathcal{S}(H)$
be a classical-quantum channel.

Let $f :  \mathcal{S} \times \mathcal{X} \rightarrow \mathcal{A}$ be
the functional form of a mosaic of a $(u, m, k, \lambda_1, \lambda_2)$ GDD.
We have 
\begin{align}& \max_{A\in P(\mathcal{A})}\exp\left(\frac{1}{|\mathcal{S}|}\sum_s\chi (A ;Z_s)\right)
\leq C(V, u, m, k, v, r, \lambda_1, \lambda_2 )\text{ .}
\label{maxchiPZ}\end{align}
Here 
 $Z_s$ $=\{Z_{s}(\alpha): \alpha\in \mathcal{A}\}$.   
\end{theorem}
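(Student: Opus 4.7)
The plan is to bound the Holevo quantity for each fixed seed $s$ by a R\'enyi-$2$ relative-entropy quantity, and then exploit the GDD Gram identity (\ref{nnr-li}) to collapse the $s$-average into an expression that does not depend on $\alpha$, after which the $\max_A$ becomes automatic.

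First, for every density operator $\sigma$ whose support contains that of each $Z_s(\alpha)$, the standard identity $\chi(A;Z_s) = \sum_\alpha A(\alpha) D(Z_s(\alpha)\|\bar Z_s)$ combined with non-negativity of $D(\bar Z_s\|\sigma)$ (where $\bar Z_s=\sum_\alpha A(\alpha)Z_s(\alpha)$) yields $\chi(A;Z_s) \le \sum_\alpha A(\alpha) D(Z_s(\alpha)\|\sigma)$. The choice $\sigma = \sigma_{\mathcal X}$ is admissible because each $Z_s(\alpha)$ is a subconvex combination of the $V(x)$. Using (\ref{ldarsld}) to pass from $D$ to $D_2$ and then the concavity of $\log$ produces
\[
\exp\bigl(\chi(A;Z_s)\bigr) \;\le\; \sum_\alpha A(\alpha)\, \mathrm{tr}\bigl(Z_s(\alpha)^2 \sigma_{\mathcal X}^{-1}\bigr).
\]
Averaging over $s$ and applying convexity of $\exp$ once more gives
\[
\exp\!\Bigl(\frac{1}{|\mathcal S|}\sum_s \chi(A;Z_s)\Bigr) \;\le\; \sum_\alpha A(\alpha)\, \frac{1}{|\mathcal S|}\sum_s \mathrm{tr}\bigl(Z_s(\alpha)^2 \sigma_{\mathcal X}^{-1}\bigr).
\]

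The second step is the combinatorial calculation. Writing $Z_s(\alpha) = k^{-1}\sum_x N_\alpha(x,s) V(x)$, where $N_\alpha$ is the incidence matrix of the $\alpha$-th member of the mosaic, squaring and summing over $s$ contracts to the Gram matrix $N_\alpha N_\alpha^\ast$. By (\ref{nnr-li}) this equals $(r-\lambda_1)\, id + (\lambda_1-\lambda_2)\, C + \lambda_2 J$, depending only on the GDD parameters and \emph{not} on the member $\alpha$. Hence
\[
\sum_s Z_s(\alpha)^2 \;=\; \frac{r-\lambda_1}{k^2}\sum_x V(x)^2 + \frac{\lambda_1-\lambda_2}{k^2}\sum_i \Bigl(\sum_{x\in\mathcal X_i} V(x)\Bigr)^{\!2} + \frac{\lambda_2}{k^2}\Bigl(\sum_x V(x)\Bigr)^{\!2}.
\]
Tracing against $\sigma_{\mathcal X}^{-1}$, normalizing by $|\mathcal S| = vr/k$ from (\ref{skvr}), and using $\sum_x V(x) = v\sigma_{\mathcal X}$ and $\mathrm{tr}(\rho^2\sigma^{-1}) = \exp(D_2(\rho\|\sigma))$ for density operators $\rho$ (and that $u^{-1}\sum_{x\in\mathcal X_i}V(x)$ is a density operator), the three summands reproduce precisely the three terms defining $C(V,u,m,k,v,r,\lambda_1,\lambda_2)$. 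Because the resulting bound has no $\alpha$-dependence, the factor $\sum_\alpha A(\alpha)$ collapses to $1$, the inequality holds uniformly in $A$, and $\max_{A\in P(\mathcal A)}$ can be taken for free.

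The main obstacle, and the structural reason the theorem holds at all, is the uniformity of the Gram identity (\ref{nnr-li}) across members of the mosaic: every $D_\alpha$ is a GDD with the \emph{same} parameters $(u,m,k,\lambda_1,\lambda_2)$, so $N_\alpha N_\alpha^\ast$ has the same closed form for every $\alpha$. Without this combinatorial symmetry, the inner $s$-average would genuinely depend on the message and one would be forced into a worst-case bound over $\alpha$ or an $A$-dependent estimate, losing the distribution-free form of $C$ and hence the semantic-security guarantee. The only delicate bookkeeping is tracking the normalizations coming from $|\mathcal S|k = vr$ and the point-class size $u$ when converting the middle Gram term into the second summand of $C$.
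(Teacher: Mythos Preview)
Your proof is correct and tracks the paper's argument closely. The combinatorial core---expanding $\sum_s Z_s(\alpha)^2$ via the GDD Gram identity (\ref{nnr-li}), tracing against $\sigma_{\mathcal X}^{-1}$, using $|\mathcal S|k=vr$, and observing that the result is independent of $\alpha$---is identical to the paper's computation (\ref{1SD2Vs}).

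The only genuine difference is in the reduction from $\chi$ to $D_2$. The paper invokes the quantum information-radius formula $\max_A\chi(A;Z_s)=\min_\sigma\max_\alpha D(Z_s(\alpha)\Vert\sigma)$ (citing \cite{Mo}) to replace the maximum over message distributions by a maximum over single messages, obtaining $\max_A\chi(A;Z_s)\le\max_\alpha D(Z_s(\alpha)\Vert\sigma_{\mathcal X})$. You instead use the elementary Pythagorean identity $\sum_\alpha A(\alpha)D(Z_s(\alpha)\Vert\sigma)=\chi(A;Z_s)+D(\bar Z_s\Vert\sigma)$ to get $\chi(A;Z_s)\le\sum_\alpha A(\alpha)D(Z_s(\alpha)\Vert\sigma_{\mathcal X})$, followed by Jensen for $\log$. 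Because the $s$-averaged $D_2$-quantity turns out to be the same constant for every $\alpha$, the paper's $\max_\alpha$ and your convex combination $\sum_\alpha A(\alpha)(\cdot)$ collapse to the same value, so the two routes coincide at the end. Your route is slightly more self-contained, as it avoids the external minimax citation; the paper's route makes the passage from distributions to pointwise messages explicit up front.
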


\begin{proof}
Since the exponential function is convex, 
we have
\[\exp\left(\frac{1}{|\mathcal{S}|} 
\sum_s \chi (A ;Z_s)\right) \leq\frac{1}{|\mathcal{S}|} 
\sum_s \exp\left(\chi (A ;Z_s)\right)\text{ .}\]
We fix one $s$. 
Let $V_{s}'$ be the channel
$\mathcal{A}\rightarrow \mathcal{S}(H')$ defined by
$\alpha \rightarrow Z_{s}(\alpha)$.
       By the quantum information radius (cf. \cite{Mo}),
we have for fixed $s$:
\begin{align*}& \max_{A\in P(\mathcal{A})} \chi (A ;Z_s)\allowdisplaybreaks\notag\\
&=\max_{A\in P(\mathcal{A})}\chi\left(A ;
\{V_{s}'(\alpha): \alpha\in \mathcal{A}\}\right)\allowdisplaybreaks\notag\\
&= \min_{\sigma}\max_{\alpha\in \mathcal{A}}D\left(V_{s}'(\alpha)\parallel  \sigma\right)\allowdisplaybreaks\notag\\
&\leq \max_{\alpha\in \mathcal{A}}  D\left( V_{s}'(\alpha)\parallel  \sigma_{\mathcal{X}}\right) 
\text{ .}
\end{align*}

By
(\ref{ldarsld}) and (\ref{ldarsld2}),
it holds
\[ D\left(V_{s}'(\alpha)\parallel  \sigma_{\mathcal{X}}\right)
 \leq D_2\left( V_{s}'(\alpha)\parallel  \sigma_{\mathcal{X}}\right)\text{ .} \]

Therefore, we have
 \begin{equation}\max_{A\in P(\mathcal{A})} \exp\left(\frac{1}{|\mathcal{S}|} 
\sum_s \chi (A ;Z_s)\right)
\leq \frac{1}{|\mathcal{S}|}\max_{\alpha\in A}\sum_s
\exp\left( D_2\left( V_{s}'(\alpha)\parallel  \sigma_{\mathcal{X}}\right)\right)
\text{ .} \label{maxchiPZleq} \end{equation} 
\vspace{0.2cm}

Let $\{| v_i \rangle : i=1,\cdots,d\}$  be an arbitrary orthonormal
basis on  $H$. Let $ \langle v_j | V(x)| v_i \rangle$
$:=a_{j,i}(x)$.  We denote 
\[\rho_{\mathcal{X}} :=\left( \begin{array} {r}\bigl(a_{j,i}(x_{1})\bigr)_{j, i=1,\cdots,d}\\
\bigl(a_{j,i}(x_{2})\bigr)_{j, i=1,\cdots,d}\\
\cdots\\
\bigl(a_{j,i}(x_{|\mathcal{X}|})\bigr)_{j, i=1,\cdots,d} \end{array}\right)^*\text{ ,}\]
to be the  $d|\mathcal{X}|\times d$-matrix such that
 ${\rho_{\mathcal{X}}}_{i,k}$ 
$=a_{j,i}(x)$
if $k=(x-1)d+j$.
Let $N$ be the incidence matrix of a $(u, m, k, \lambda_1, \lambda_2)$ GDD.
Notice that for every $\alpha$:

\begin{align}& \bigl(\rho_{\mathcal{X}}(id_{H} \otimes N_\alpha) \bigr)
\bigl(\rho_{\mathcal{X}}(id_{H} \otimes N_\alpha)\bigr)^*\allowdisplaybreaks\notag\\
&= \sum_s\sum_{x: f(s,x)=\alpha} V(x)^2 \allowdisplaybreaks\notag\\
&= k^2\sum_s Z_{s}(\alpha)^2 \text{ .}\label{vsrihn} \end{align}

Now we have
\begin{align}&
\frac{1}{|\mathcal{S}|}\sum_s\exp  \left(D_2\left(Z_{s}(\alpha)\parallel \sigma_{\mathcal{X}}\right)\right)\allowdisplaybreaks\notag\\
&=\frac{1}{|\mathcal{S}|}\sum_s\mathrm{tr}\bigl(
Z_{s}(\alpha)^2\sigma_{\mathcal{X}}^{-1}\bigr)\allowdisplaybreaks\notag\\
&=\frac{1}{k^2|\mathcal{S}|}\mathrm{tr}\bigl(\rho_{\mathcal{X}}(id_{H} \otimes N)
(id_{H} \otimes N)^*\rho_{\mathcal{X}}^*\sigma_{\mathcal{X}}^{-1} \bigr)\allowdisplaybreaks\notag\\
&=\frac{1}{krv}\mathrm{tr}\bigl(\rho_{\mathcal{X}}(id_{H} \otimes N)
(id_{H} \otimes N)^*\rho_{\mathcal{X}}^*\sigma_{\mathcal{X}}^{-1} \bigr)\allowdisplaybreaks\notag\\
&=\frac{r-\lambda_1}{krv}\mathrm{tr}\biggl(\rho_{\mathcal{X}} (id_{H}\otimes id_{\mathbb{C}^{|\mathcal{X}|}})\rho_{\mathcal{X}}^*
\sigma_{\mathcal{X}}^{-1} \biggr)\allowdisplaybreaks\notag\\
&~+\frac{\lambda_1-\lambda_2}{krv}
\mathrm{tr}\biggl(\rho_{\mathcal{X}} (id_{H}\otimes C)\rho_{\mathcal{X}}^*
\sigma_{\mathcal{X}}^{-1} \biggr)\allowdisplaybreaks\notag\\
&~+\frac{\lambda_2}{krv}\mathrm{tr}\biggl(\rho_{\mathcal{X}}( id_{H}\otimes J)\rho_{\mathcal{X}}^*
\sigma_{\mathcal{X}}^{-1}\biggr)\allowdisplaybreaks\notag\\
&=\frac{r-\lambda_1}{krv}\sum_{x\in\mathcal{X}}\mathrm{tr}\biggl(V(x)^2
(\sigma_{\mathcal{X}})^{-1} \biggr)\allowdisplaybreaks\notag\\
&~+u\frac{\lambda_1-\lambda_2}{krv}\frac{1}{m} \sum_i
\mathrm{tr}\biggl((\frac{1}{u} \sum_{x\in\mathcal{X}_i} V(x))^2
\sigma_{\mathcal{X}}^{-1} \biggr)\allowdisplaybreaks\notag\\
&~+\frac{\lambda_2}{krv}v^2\sum_x \mathrm{tr}\sigma_{\mathcal{X}}\allowdisplaybreaks\notag\\
&=\frac{r-\lambda_1}{krv} \sum_{x} \exp\left( D_2\left(V(x)\parallel  
\sigma_{\mathcal{X}}\right)\right)\allowdisplaybreaks\notag\\
&~+u\frac{\lambda_1-\lambda_2}{krv} \frac{1}{m} \sum_i
 \exp\left(D_2\left(\frac{1}{u}\sum_{x\in\mathcal{X}_i}V(x){\textstyle \parallel } 
\sigma_{\mathcal{X}}\right)\right)\allowdisplaybreaks\notag\\
&~+\frac{\lambda_2}{kr}v\text{ .}
\label{1SD2Vs}\end{align} 
The first equation  is the definition
of $D_2$.
 The second equation holds because of
(\ref{vsrihn}).
 The third equation holds because of (\ref{skvr}). The fourth equation holds because of
(\ref{nnr-li}). The  fifth 
equation holds because, by the definitions of
$\rho_{\mathcal{X}}$, $C$, and $J$,
we have $\rho_{\mathcal{X}} \rho_{\mathcal{X}}^*$
$=$ $\sum_{x\in\mathcal{X}}V(x)^2$, $\rho_{\mathcal{X}} 
( id_{H}\otimes C)\rho_{\mathcal{X}}^*$
$=$ $\frac{u}{m} \sum_i
\mathrm{tr}\Bigl((\frac{1}{u} \sum_{x\in\mathcal{X}_i} V(x))^2\Bigr)$
and  $\rho_{\mathcal{X}} 
( id_{H}\otimes J)\rho_{\mathcal{X}}^*$
$=$ $v^2\sigma_{\mathcal{X}}^2$.
 The sixth equation  is again the definition
of $D_2$.

(\ref{maxchiPZ}) follows from (\ref{maxchiPZleq}) and (\ref{1SD2Vs}).
\end{proof}

\vspace{0.4cm}

\begin{remark} 
Theorem \ref{lfsoma} shows how many
confidential messages are maximal
possible when a degree of security is required  and shows how
this can be achieved  using the functional form of
a mosaic of GDDs and  BIBDs.\end{remark}

 \vspace{0.2cm}

Since a GDD with $\lambda_1=\lambda_2$ is a BIBD,
the following corollary is a consequence of 
Corollary \ref{lfsoma}:
\begin{corollary}\label{lfsoma2}
Let $f :  \mathcal{S} \otimes \mathcal{X} \rightarrow A$ be
the functional form of a mosaic of a $(v, k, \lambda)$ BIBD.

We have 
\begin{align}& \max_{P\in P(A)}\exp\left(\frac{1}{|\mathcal{S}|}\sum_s\chi (P ;Z_s)\right)\allowdisplaybreaks\notag\\
&\leq \left(1-\frac{r-\lambda}{kr}\right)
+\frac{r-\lambda}{kr}
\frac{1}{v}\sum_{x} \exp\left(D_2\left(V(x)\parallel  
\sigma_{\mathcal{X}}\right)\right)\text{ .}
\label{maxchiPZ2}\end{align}
\end{corollary}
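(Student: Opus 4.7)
The plan is to derive the Corollary as a direct specialisation of Theorem \ref{lfsoma}, using the fact that a $(v,k,\lambda)$ BIBD is precisely a GDD in which the within-class and cross-class coincidence counts coincide. Formally I would regard the BIBD as a GDD with the trivial partition (for instance $u=1$, $m=v$, so that the auxiliary matrix $C$ collapses appropriately and the two parameters coalesce to the common value $\lambda_1 = \lambda_2 = \lambda$). At the level of the key matrix identity (\ref{nnr-li}), this exactly recovers the BIBD identity (\ref{nnr-li2}), so every step of the computation in the proof of Theorem \ref{lfsoma} goes through unchanged.

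Next I would specialise the bound $C(V,u,m,k,v,r,\lambda_1,\lambda_2)$ accordingly. With $\lambda_1 = \lambda_2 = \lambda$, the middle summand carries the prefactor $\lambda_1 - \lambda_2 = 0$ and therefore drops out. The remaining two summands, read off directly from the chain of equalities culminating in (\ref{1SD2Vs}), simplify to
\begin{equation*}
\frac{r-\lambda}{kr}\cdot\frac{1}{v}\sum_{x} \exp\left(D_2\left(V(x)\parallel \sigma_{\mathcal{X}}\right)\right) \; + \; \frac{v\lambda}{kr}.
\end{equation*}

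Finally, I would invoke the BIBD parameter relation $r(k-1) = \lambda(v-1)$ recorded in Section \ref{Mocd}. Rearranging it as $\lambda v = kr - r + \lambda$ and dividing by $kr$ yields the clean identity $\frac{v\lambda}{kr} = 1 - \frac{r-\lambda}{kr}$. Substituting this into the constant term of the previous display rewrites the whole bound in exactly the shape appearing on the right-hand side of (\ref{maxchiPZ2}), completing the argument. The only thing requiring care is bookkeeping the $1/v$ normalisation that arises in passing from the raw GDD bound to its BIBD form via (\ref{1SD2Vs}); there is no substantive analytic obstacle, as the Corollary is a purely algebraic reformulation of Theorem \ref{lfsoma} under the BIBD parameter constraint.
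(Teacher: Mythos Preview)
Your proposal is correct and follows essentially the same approach as the paper: the paper's entire argument is the single sentence ``Since a GDD with $\lambda_1=\lambda_2$ is a BIBD, the following corollary is a consequence of [Theorem] \ref{lfsoma},'' and you have spelled out exactly this specialisation, together with the algebraic step $\frac{v\lambda}{kr}=1-\frac{r-\lambda}{kr}$ from the BIBD relation $r(k-1)=\lambda(v-1)$ that the paper leaves implicit. Your care about the $1/v$ normalisation is well placed, since the stated definition of $C(V,\dots)$ and the computation in (\ref{1SD2Vs}) differ by precisely that factor; reading from (\ref{1SD2Vs}) as you do is the consistent choice.
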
\vspace{0.2cm}


We denote the quantum states
at the output of $V$ by $\eta^{Z}$
We can describe classical-quantum hybrid system 
$ASZ$ by an ensemble $\eta^{ASZ}$
when we embody the classical $A$ and $S$  into a  $\lvert \mathcal{A} \rvert$ dimensional
 Hilbert space  $H_{\mathcal{A}}$.
   with orthonormal basis
  $\{| \alpha \rangle : \alpha\in \mathcal{A}\}$ and
  a  $\lvert \mathcal{S} \rvert$ dimensional
  space  $H_{\mathcal{S}}$ with orthonormal basis
  $\{| s \rangle : s\in \mathcal{S}\}$, respectively.
The quantum state of the
system $ASZ$ is
\[\eta^{ASZ}:=\frac{1}{|\mathcal{A}|} \frac{1}{|\mathcal{S}|}\sum_{\alpha\in \mathcal{A}}\sum_{s\in \mathcal{S}} |\alpha \rangle \langle \alpha|\otimes
|s \rangle \langle s| \otimes  Z_s(\alpha) \text{ .}\]
By Theorem \ref{lfsoma}, we can bound $\left\|\eta^{ASZ}-\eta^{AS}\otimes \eta^{Z}\right\|_1$
for all possible message distributions
by the following corollary:

\begin{corollary}
\label{wmsv-}
Let $f :  \mathcal{S} \otimes \mathcal{X} \rightarrow \mathcal{A}$ be
the functional form of a mosaic of a $(u, m, k, \lambda_1, \lambda_2)$ GDD.
By Theorem \ref{lfsoma} we have 
\begin{align}&  
\left\|\eta^{ASZ}-\eta^{AS}\otimes \eta^{Z}\right\|_1\allowdisplaybreaks\notag\\
&\leq \sqrt{ 2\ln 2\log
C(V, u, m, k, v, r, \lambda_1, \lambda_2 )}
\text{ .}
\end{align}

\end{corollary}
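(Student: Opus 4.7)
The plan is to derive the trace-distance bound from the Holevo bound in Theorem \ref{lfsoma} by means of the quantum Pinsker inequality $\|\rho-\sigma\|_1\le\sqrt{2\ln 2\,D(\rho\|\sigma)}$. I would apply it to $\rho=\eta^{ASZ}$ and $\sigma=\eta^{AS}\otimes\eta^Z$; since both states are block diagonal with respect to the classical basis $\{|\alpha\rangle|s\rangle\}$, the relative entropy collapses to the quantum mutual information
\[
D(\eta^{ASZ}\|\eta^{AS}\otimes\eta^Z)=\frac{1}{|\mathcal{A}||\mathcal{S}|}\sum_{\alpha,s}D\bigl(Z_s(\alpha)\parallel\eta^Z\bigr)=I(AS;Z),
\]
so the task reduces to showing $I(AS;Z)\le \log C(V,u,m,k,v,r,\lambda_1,\lambda_2)$.

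For this I would use the chain rule $I(AS;Z)=I(S;Z)+I(A;Z|S)$ and observe that the first term vanishes under the uniform message distribution. Indeed, since the members $D_\alpha$ of the mosaic partition all incidences of any fixed block index $s$, every $x\in\mathcal{X}$ has a unique image $\alpha=f(s,x)$, and hence
\[
\frac{1}{|\mathcal{A}|}\sum_\alpha Z_s(\alpha)=\frac{1}{|\mathcal{A}|k}\sum_{x\in\mathcal{X}}V(x)=\sigma_{\mathcal{X}},
\]
where the last equality uses the counting identity $|\mathcal{A}|k=v$ obtained by totaling the incidences of $s$ across the $|\mathcal{A}|$ members of the mosaic. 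Thus the $Z$-marginal of $\eta^{SZ}$ given $S=s$ equals $\sigma_{\mathcal{X}}$ for every $s$, so $\eta^{SZ}=\eta^S\otimes\eta^Z$ and $I(S;Z)=0$. Since $S$ is classical and uniform and $A$ is independent of $S$, the remaining term satisfies $I(A;Z|S)=\frac{1}{|\mathcal{S}|}\sum_s\chi(A;Z_s)$, which Theorem \ref{lfsoma} bounds by $\log C(V,u,m,k,v,r,\lambda_1,\lambda_2)$. Chaining this with the Pinsker estimate gives the claim.

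The only piece that is not routine is the identity $I(S;Z)=0$: it rests on the counting relation $|\mathcal{A}|k=v$ inherent to the mosaic structure, and is precisely what reconciles the form $\eta^{AS}\otimes\eta^Z$ appearing in the statement with the more standard cryptographic distance $\|\eta^{ASZ}-\eta^A\otimes\eta^{SZ}\|_1$ attached to semantic security. Once this independence of the seed and the eavesdropper's averaged output is established, the Holevo-style bound of Theorem \ref{lfsoma}, which is only conditional on $S$, is enough to control the full trace distance.
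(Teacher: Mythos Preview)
Your argument is correct, and it is a genuinely different (and arguably cleaner) route than the paper's. The paper does not apply Pinsker once to the global state; instead it first decomposes
\[
\left\|\eta^{ASZ}-\eta^{AS}\otimes\eta^{Z}\right\|_1
\le \max_{\alpha}\frac{1}{|\mathcal S|}\sum_s\|Z_s(\alpha)-\sigma_{\mathcal X}\|_1,
\]
then squares and applies Jensen to pull the square inside the average over $s$, applies the quantum Pinsker inequality to each block $\|Z_s(\alpha)-\sigma_{\mathcal X}\|_1^2\le 2\ln 2\,D(Z_s(\alpha)\|\sigma_{\mathcal X})$, and finally uses convexity of $\exp$ together with the explicit identity (\ref{1SD2Vs}) from inside the proof of Theorem~\ref{lfsoma} (not the theorem statement itself) to reach $C(V,\dots)$.

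Your approach replaces the block-wise Pinsker plus the two Jensen/convexity steps by a single application of Pinsker to $\eta^{ASZ}$ versus $\eta^{AS}\otimes\eta^Z$, followed by the chain rule $I(AS;Z)=I(S;Z)+I(A;Z\mid S)$ and the observation $I(S;Z)=0$. This lets you invoke Theorem~\ref{lfsoma} directly as a black box (for the uniform message distribution), which matches the corollary's phrasing ``By Theorem~\ref{lfsoma}'' better than the paper's own proof. The identity $|\mathcal A|k=v$ you use to show $\frac{1}{|\mathcal A|}\sum_\alpha Z_s(\alpha)=\sigma_{\mathcal X}$ is exactly what the paper uses implicitly when it computes $\eta^Z=\sigma_{\mathcal X}$, so no extra structural assumption is needed. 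The paper's route, on the other hand, passes through the worst-case message $\max_\alpha$ rather than the average, which is a slightly sharper intermediate inequality even though both land on the same final bound.
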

\begin{proof}
We have 
\begin{align*}&\eta^{Z}\allowdisplaybreaks\\
&=\mathrm{tr}_{AS}(\eta^{ASZ})\allowdisplaybreaks\\
&=\frac{1}{|\mathcal{A}|} \frac{1}{|\mathcal{S}|}
\sum_{\alpha\in \mathcal{A}}\sum_{s\in \mathcal{S}}  Z_s(\alpha)\allowdisplaybreaks\\
&=\sigma_{\mathcal{X}}
\text{ .}\end{align*}

By the triangle inequality, it holds
\begin{align*}&\left\|\eta^{ASZ}-\eta^{AS}\otimes \eta^{Z}\right\|_1\allowdisplaybreaks\\
&=\left\|\frac{1}{|\mathcal{A}|} \frac{1}{|\mathcal{S}|}\sum_{\alpha\in \mathcal{A}}
 \sum_{s\in \mathcal{S}}  Z_s(\alpha)-\sigma_{\mathcal{X}}\right\|_1\allowdisplaybreaks\\
&\leq\frac{1}{|\mathcal{A}|} \frac{1}{|\mathcal{S}|}\sum_{\alpha\in \mathcal{A}}
 \sum_{s\in \mathcal{S}} \left\| Z_s(\alpha)-\sigma_{\mathcal{X}}\right\|_1\allowdisplaybreaks\\
&\leq\max_{\alpha\in \mathcal{A}}\frac{1}{|\mathcal{S}|}
\sum_{s\in \mathcal{S}} \left\|Z_s(\alpha)-\sigma_{\mathcal{X}}\right\|_1
\text{ .}\end{align*}
Since for nonnegative $\{a_{1}, \cdots, a_{N}\}$ we have
$(\frac{1}{N} \sum_{i=1}^{N} a_{i})^2\leq \frac{1}{N} \sum_{i=1}^{N} a_{i}^2$, it holds
\[\left\|\eta^{ASZ}-\eta^{AS}\otimes \eta^{Z}\right\|_1^2
\leq \max_{\alpha}\frac{1}{|\mathcal{S}|}  \sum_{s\in \mathcal{S}} \left\|Z_s(\alpha)-
\sigma_{\mathcal{X}}\right\|_1^2\text{ .}\] By the Quantum Pinsker inequality (cf. \cite{Hay}) we have
\[\|Z_s(\alpha)-\sigma_{\mathcal{X}}\|_1^2  
\leq 2\ln 2 D\left(V_{s}'(\alpha) \parallel \sigma_{\mathcal{X}} \right)\text{ .}\] 
Since the exponential function is convex, 
we have
\begin{align*}&\exp\Bigl(\frac{1}{2\ln 2} \left\|\eta^{ASZ}-\eta^{S}\otimes \eta^{Z}\right\|_1^2\Bigr) \allowdisplaybreaks\\
&\leq \max_{\alpha}\exp\left( 
 \frac{1}{|\mathcal{S}|}\sum_{s\in \mathcal{S}} D\bigl(V_{s}'(\alpha) \parallel \sigma_{\mathcal{X}} \bigr)\right)\allowdisplaybreaks\\
 &\leq \max_{\alpha}\frac{1}{|\mathcal{S}|} \sum_{s\in \mathcal{S}}
\exp\left(D\left(V_{s}'(\alpha) \parallel \sigma_{\mathcal{X}}\right)\right)\text{ .}\end{align*}

Corollary \ref{lfsoma} follows from (\ref{1SD2Vs}).
\end{proof}
This means that we can archive that the eavesdropper's observations are  nearly independent
of the message.




Theorem \ref{lfsoma} 
shows how much randomness  is sufficient in the randomized inverse to obtain a given
level of semantic security. We apply the derandomization
technique to  construct a semantic-security code without common randomness using
a transmission code and a common-randomness semantic-security code with
appropriate error scaling.

\subsection{Secrecy Rate}\label{Groaom}

Suppose we have a $(n, J_n)$ public  code with small decoding error.
    In order to achieve semantic security in the presence 
of an eavesdropper's 
channel $V$, the functional form $f:\mathcal X\times\mathcal S\to\mathcal A$ of a mosaic of $(J_n,k,\lambda)$ BIBDs is applied as the security function. By Theorem \ref{lfsoma}, 
the information leakage $\max_A\frac{1}{|\mathcal S|}\sum_s\chi(A;Z_s)$, 
using $\log(1+t)\leq t$, can be upper-bounded by
$
    \frac{r-\lambda}{kr}\sum_x\exp(D_2(V(x)\Vert\sigma_{\mathcal X}))$.

In order for this to be small, $k$ should be sufficiently larger than $\sum_x\exp(D_2(V(x)\Vert\sigma_{\mathcal X}))$. 
For given positive 
$\epsilon_{leak}$
 for a small positive $\lambda$ when $n$ is sufficiently large then for
\[k=\sum_{x^n}\exp(D_2(V^n(x^n)\Vert\sigma_{\mathcal X}^n))+\lambda\]
the information leak is bounded by $\epsilon_{leak}$.
This means that the number of messages
     per $n$ channel uses 
in a semantically secure way reduces from $J_n$ to $\frac{J_n}{k}$.      In information theory, it is more common to talk about rates by taking the logarithm of the number of messages. Then the rate reduces from 
    $\frac{\log J_n}{n}$ to 
\begin{align*}&\frac{\log|{\mathcal A}_n|}{n}\\
&=\frac{\log \frac{J_n}{k}}{n}\\
&=\frac{\log \frac{J_n}{k}-\sum_x\exp(D_2(V(x)\Vert\sigma_{\mathcal X}))-\lambda}{n}\text={ .}\end{align*}
 The rate loss
$
    R_f=\frac{\log J_n-\log|{\mathcal A}_n|}{n}
$
can be regarded as the rate of the security function. The rate of the functional form of a mosaic of GDDs is defined in the same way, but the relation to security is more complicated since it depends on the type of the underlying GDDs (see \cite{Bo/Wi} for a detailed discussion).
By \cite{Ho}, \cite{Sch/Ni},
and  \cite{Sch/Wes}, there exist reliable public code
with rate $J_n=\max_A \chi (A, W(A))$. Thus, 
   with ous approach,
   we can archive
the semantic secrecy rate
\[\max_A \chi (A, W(A))-D_2(V(A)\Vert\sigma_{\mathcal X})\text{ .}\]

For our modular code, we need common randomness
as an additional resource.
As \cite{Bo/No} showed, the common randomness
is a very ``costly'' resource.
 Thus, our next step is a process known as
derandomization, when 
the sender generates
the seeds and send it
to the receiver with the public code. 
The standard derandomization works as follows:
Every code word $\mathcal{C}^{det} $ $=$ $(E^{\mu(|S|)+n},\{G_{\alpha}^{\mu(|S|)+n} :
\alpha\})$ is 
a composition of words of a public code
 $(E^{\mu(n)},\{G_{s}^{\mu(|S|)} :
s\})$, and words of a semantic-secure modular 
code  $\{(E^{n}_{s},\{G_{s,\alpha}^{n} :
\alpha\}):s\}$.
Here  $\mu(|S|)$ is the length of 
the first code words.  The sender  chooses a $s$ uniformly at random from $\mathcal S$
and uses the public code  $(E^{\mu(|S|)},\{G_{s}^{\mu(|S|)} :
s\})$
to send it to the receiver as a pre-code (recall
$s$ do not have to be secure).
Thus we have a code with encoder
$E^{\mu(|S|)+n}((x^{\mu(|S|)}x^{n})|\alpha)$ 
$= \frac{1}{|S|}\sum_{s}E^{\mu(|S|)}(x^{\mu(|S|)}|s)
E^{n}_{s}(x^{n}|\alpha)$ and decoder operators
$G^{\mu(n)+n}_{s,\alpha}$ 
$= \sum_{s} G_{s}^{\mu(|S|)}\otimes G_{s,\alpha}^{n}$,
      which consists of
alternate public  code words and semantic-secure modular 
code words, where we use the public  code to generate the seed, and use it only once in
the semantic-secure modular code. 
   However, when the
size of the seed set is too large, then
$\mu(|S|)$ may be also too large, and
this standard derandomization technique
may cause significant rate loss. 
      Follow the idea
of \cite{Wi/Bo} and \cite{Bo/Ca/De/Fe/Wi},
we reduce the total size of channel uses by
reusing one seed for multiple semantic-secure modular code words.     
Instead of one  two-part code word for
every single message, 
we build a $(N+1)$-tuple of codewords. 
Each tuple is a composition of a public 
codeword that generates the seed and $N$ 
semantic secure modular code words to 
transmit $N$ messages to the intended receiver.
The code $\mathcal{C}^{det} $ $=$ $(E^{\mu(|S|)+nN},\{G_{\alpha}^{\mu(|S|)+nN} :
\alpha\})$ 
consists of encoder
$E^{\mu(|S|)+nN}((x^{\mu(|S|)}x^{nN})|\alpha)$ 
$= \frac{1}{|S|}\sum_{s}E^{\mu(|S|)}(x^{\mu(|S|)}|s)
E^{n}_{s}(x^{n}|\alpha_{1})\cdots
E^{n}_{s}(x^{n}|\alpha_{N})$ and decoder operators
$G^{\mu(n)+nN}_{\alpha}$ 
$= \sum_{s} G_{s}^{\mu(|S|)}\otimes \bigotimes_{j=1}^{N} G_{s,\alpha_{j}}^{n}$.
The 
$N$ 
semantic secure modular code words in every
tuple share one single seed.
 We choose a squence
$N(n)$ such that $1\ll N(n)$,
$N(n)\ll P_e(\mathcal{C}, n)^{-1}$
(or $N(n)\ll  P_e^{m}(\mathcal{C}, n)^{-1}$),
and $N(n)\ll \epsilon_{leak}^{-1}$
    With this approach,
the semantic secrecy rate above
 can be achieved.

\section{Further Remark}

Several efficiently computable examples of mosaics of BIBDs and of GDDs are given in \cite{Bo/Wi}. Efficient computability of a mosaic here means that the functional form as well as the random choice of an $x\in\mathcal X$ given a seed $s$ and a message $\alpha$ can be computed in polynomial time. All these examples are constructed in such a way that the seed set $\mathcal S$ should be as small as possible compared with $\mathcal X$ for the given rate $R_f$. It turns out that $\log|\mathcal S|\geq2R_f\log|\mathcal X|$ roughly for $R_f\geq \frac{1}{2}$ and that this lower bound is tight, both for mosaics of BIBDs and of GDDs. If $R_f<\frac{1}{2}$, then in mosaics of BIBDs, the best one can hope for is that $|\mathcal S|$ is at least as large as $|\mathcal X|$, and this can be approximated for sufficiently large $\mathcal X$. For mosaics of GDDs, one can still achieve $\log|\mathcal S|=2R_f\log|\mathcal X|$, but only at the price that the rate loss in order to achieve a given security level from Theorem \ref{lfsoma} is suboptimally large except for special channels adapted to the functional form of the mosaic. An optimal rate loss in the case of mosaics of GDDs with $R_f<\frac{1}{2}$ still requires $|\mathcal S|\geq|\mathcal X|$. Details for classical wiretap channels can again be found in \cite{Bo/Wi}, and via Theorem \ref{lfsoma} they carry over to the classical-quantum wiretap channels treated here.

\begin{ex}\label{ex2}

We would like to mention here another security function which has appeared before in the literature for which it has been proved that it achieves the secrecy capacity of discrete memoryless wiretap channels with semantic security, but which has so far not been recognized as the functional form of a mosaic of designs. This function is the example of a different concept of security functions underlying the analysis in \cite{Hay/Mat} Remark 16. It has also been discussed in some detail in \cite{Wi/Bo}, Appendix C. 

Set $\mathcal X=\mathbb F_{q^t}$ and let $\ell$ be a positive integer smaller than $t$. As message set, take any $(t-\ell)$-dimensional subspace $\mathcal A$ of $\mathbb F_q^t$ and also choose any $\ell$-dimensional subspace $\mathcal V$ of $\mathbb F_q^t$ satisfying $\dim(\mathcal A\cap\mathcal V)=0$. Then set $\mathcal S_1=\mathbb F_{q^t}^*$ and $\mathcal S=\mathcal S_1\times\mathcal A$ and define $f:\mathcal S\times\mathcal X\to\mathcal A$ by 
\[
    f(s,x)=\alpha\qquad\text{if }s_1x+s_2\in \alpha+\mathcal V,
\]
where $s=(s_1,s_2)$. Given a basis of $\mathcal V$, the computation of this function and its randomized inverse can be done efficiently.
We claim that $f$ is the functional form of a mosaic of BIBDs.

In order to prove this claim, we need to show that the sets 
\[
    \{x:s_1x+s_2=\alpha+\mathcal V\}=s_1^{-1}(\alpha-s_2+\mathcal V).
\]
for each $s=(s_1,s_2)\in\mathcal S$ and every $\alpha\in\mathcal A$ are the blocks of a BIBD on $\mathcal X$. First of all, we note that all of these sets are of size $k=q^\ell$. Now choose any $x\neq x'\in\mathcal X$. Then
\begin{align*}&
    \{s:f(s,x)=f(s,x')=\alpha\}\\
    &=\{(s_1,s_2):s_1(x-x')\in\mathcal V,s_1x+s_2\in \alpha+\mathcal V\}.
\text{ .}\end{align*}
The size of this set equals
\[
    \lambda
    =\sum_{s_1\in(x-x')^{-1}\mathcal V\setminus\{0\}}\lvert\mathcal A\cap(\alpha-s_1x+\mathcal V)\rvert
    =q^\ell-1.
\]
Thus the sets    $\{x:f(s,x)=\alpha\}$ for $s\in\mathcal{S}$     form the blocks of a BIBD on $\mathcal X$ with parameters
\[
    v=q^t,\quad b=q^{t-\ell}(q^t-1),\quad r=q^t-1,\quad k=q^\ell,\quad \lambda=q^\ell-1.
\]
The rate $R_f=\frac{t-\ell}{t}$ can be chosen flexibly. Note that $\log|\mathcal S|\approx(1+R_f)\log|\mathcal X|>2R_f\log|\mathcal X|$ if $R_f<1$, so $f$ is not optimal in terms of seed length as discussed in \cite{Bo/Wi}.

\end{ex}

\section*{Acknowledgment}The work of H. Boche 
supported by the
German Federal Ministry of Education and Research (BMBF) under Grants
16KIS0858 and 16KIS0948, and by the German Research
Foundation (DFG) within the Gottfried Wilhelm Leibniz Prize under Grant BO
1734/20-1 and within Germany's Excellence Strategy EXC-2111 - 390814868
and EXC-2111 - 390814868. The work of M. Wiese was  supported  by the German Research
Foundation (DFG) within the  Germany's Excellence Strategy - EXC 2092 CASA-390781972.
The work of M. Cai was  supported  by the German Research
Foundation (DFG) within the Walter Benjamin-Fellowship CA 2779/1-1


\begin{thebibliography}{xxx}
\bibitem{Ahl1} R. Ahlswede,  Elimination of correlation in random codes for arbitrarily varying channels,
Z. Wahrscheinlichkeitstheorie verw. Gebiete,
Vol. 44, 159-175, 1978.
\bibitem{Ahl/Bli} R. Ahlswede and  V. Blinovsky,  Classical capacity of classical-quantum arbitrarily
varying channels, IEEE Trans. Inform. Theory, Vol. 53, No. 2,
526-533, 2007.
\bibitem{Be/Te/Va} M. Bellare, S. Tessaro and A. Vardy, A cryptographic 
treatment of the wiretap channel,  	arXiv:1201.2205, 2012.
\bibitem{Be/Li/Ja/Jo} K.L. Besser, P.H. Lin, C. R. Janda, and E. A. Jorswieck,
 Wiretap code design by neural network autoencoders,    IEEE Trans.
 on Inf. Forensics and Security, Vol. 15, 3374-3386, 2019.
\bibitem{Bl/Ha/Th} M. Bloch, M. Hayashi, and A. Thangaraj, Error-control coding for physical-layer secrecy,
 Proc. IEEE, Vol. 103, No. 10,  1725-1746,  2015.
\bibitem{Bo/Ca/De/Fe/Wi} H. Boche,  M. Cai,  C. Deppe,  R. Ferrara, and
M. Wiese, Semantic security for quantum wiretap channels,
2020 IEEE International Symposium on Information Theory (ISIT),  	arXiv:2001.05719.
2020.  
\bibitem{Bo/No} H. Boche and J. N\"otzel, Arbitrarily small amounts of correlation
for arbitrarily varying quantum channel,  J. Math. Phys., Vol. 54, Issue 11,  \tt arXiv 1301.6063\rm,
2013.
\bibitem{Ca/Wi/Ye} N. Cai, A. Winter, and R. W. Yeung, Quantum privacy and
quantum wiretap channels, Problems of Information Transmission, Vol.
40, No. 4,  318-336, 2004.
\bibitem{De} I. Devetak, The private classical information capacity and
quantum information capacity of a quantum channel, IEEE Trans. Inf. Theory, Vol. 51, No. 1, 44-55,  2005.
\bibitem{Fe/Bo1} G. P. Fettweis and H. Boche, 6G: The personal tactile
internet - and open questions for information theory,
IEEE BITS Info. Th. Magazine, 2021.
\bibitem{Fe/Bo2}  G. P. Fettweis and H. Boche, On 6G and trustworthiness,
Communications of ACM, invited paper (to be
published), 2022.
\bibitem{Fi/Bo} F. Fitzek  and H. Boche, 6G-life: Digital transformation
and sovereignty of future communication networks,
IEEE Network, Vol. 35,   3-4,  Nov./Dec. Issue, 2021.
\bibitem{Go/Mi} S. Goldwasser and S. Micali, Probabilistic encryption \& 
how to play mental poker keeping secret all partial information, 
STOC'82: Proceedings of the fourteenth annual ACM symposium on Theory 
of computing, 365-377, 1982.
\bibitem{Go/Mi2} S. Goldwasser and S. Micali, Probabilistic encryption, J. Comput. System Sci., Vol. 28, No. 2,  270-299, 1984.
\bibitem{Ha15} M. Hayashi, Quantum wiretap channel with
non-uniform random number and its exponent and equivocation
rate of leaked information, IEEE Trans.  Inf. Theory, 
Vol. 61, No. 10, 5595-5622, 2015.
\bibitem{Hay}  M. Hayashi, Quantum Information Theory, Springer-Verlag Berlin Heidelberg, 2017.
\bibitem{Hay/Mat} M. Hayashi and R. Matsumoto, Secure multiplex coding with dependent and non-uniform multiple messages, 
IEEE Trans. Inf. Theory, Vol. 62, No. 5, 2355-2409, 2016.
\bibitem{Ho} A. S. Holevo,  The capacity of quantum channel with general signal states,
 IEEE Trans. Inform. Theory,  Vol. 44, 269-273, 1998.
\bibitem{Li/Ya/Li} L. Liu, Y. Yan, and C. Ling, Achieving secrecy capacity of the 
Gaussian wiretap channel with polar lattices, IEEE Trans. Inf. Theory, Vol. 64, No. 3, 
1647-1665, 2018.
\bibitem{Mo} M. Mosonyi, Coding theorems for compound problems via quantum R\'enyi divergences,
 IEEE Trans. Inf. Theory, Vol.  61. No.  6, 2997-3012, 2015.
\bibitem{Mo/Da} M. Mosonyi and N. Datta, Generalized relative entropies
and the capacity of classical-quantum channels, J. Math. Phys., Vol.  50, 072104, 2009.
\bibitem{Re/Du/Re} J. M. Renes, F. Dupuis, and R. Renner,
Efficient polar coding of quantum information,
Phys. Rev. Lett., Vol. 109, 050504, 2012.
\bibitem{Sch/Ni} B. Schumacher and M. A. Nielsen, Quantum data processing and error correction, Phys.
Rev. A, Vol. 54,  2629, 1996.
\bibitem{Sch/Wes} B. Schumacher  and M. D. Westmoreland, Sending classical information via noisy quantum channels,
Phys. Rev., Vol. 56, 131-138, 1997.
\bibitem{Ta/We} A. S. Tanenbaum and D. J. Wetherall, Computer Networks, 5th ed., Prentice Hall, 2011.
\bibitem{Wi/Bo} M. Wiese and H. Boche,
Semantic security via seeded modular coding
schemes and Ramanujan graphs, IEEE Trans. Inf. Theory, Vol. 67, Vol. 1,
52-80, 2021.
\bibitem{Bo/Wi} M. Wiese  and  H. Boche, Mosaics of combinatorial designs for
information-theoretic security, Des. Codes Cryptogr.,  05 January 2022, DOI:10.1007/s10623-021-00994-1.
\bibitem{Wyn} A. D. Wyner, The wire-tap channel, Bell System Technical
Journal, Vol. 54, No. 8, 1355-1387, 1975.
\end{thebibliography}
\end{document}